\def\endfront@text{}
\let\citeN\citet
\newtheorem{theorem}{Theorem}[section]
\newlength{\figwidth}
\newlength{\halffigwidth}
\def\setseparator{\mid}
\newcommand{\set}[2][\relax]{
  \ifx#1\relax
    \{#2\}
  \else
    \ifx#1\left
      \left\{#2\right\}
    \else
      \csname #1l\endcsname\{#2\csname #1r\endcsname\}
    \fi
  \fi
}
\newcommand{\setst}[3][\relax]{\set[#1]{#2\setseparator#3}}
\def\killprob{p_{\mathrm{kill}}}
\def\permitprob{p_{\mathrm{permit}}}
\def\gcr{\mathit{gcr}}
\def\scr{\mathit{scr}}
\begin{document}

\title{Effectiveness and detection of \\ denial of service attacks in Tor} 

\def\wesaddr{Department of Mathematics and Computer Science \\
Wesleyan University \\
Middletown, CT 06459 USA
}
\author{Norman Danner}
\address{\wesaddr}
\email{ndanner@wesleyan.edu}

\author{Sam DeFabbia-Kane}
\address{\wesaddr}
\email{sdefabbiakan@wesleyan.edu}

\author{Danny Krizanc}
\address{\wesaddr}
\email{dkrizanc@wesleyan.edu}

\author{Marc Liberatore}
\address{Department of Computer Science \\
University of Massachusetts Amherst \\
Amherst, MA 01003 USA}
\email{liberato@cs.umass.edu}

\begin{abstract}
Tor is one of the more popular systems for 
anonymizing near real-time communications on the Internet. 
Borisov et al.\ proposed a denial of service 
based attack on Tor (and related systems) that significantly
increases the probability of compromising the anonymity provided. 
In this paper, we analyze the effectiveness of the attack using both
an analytic model and simulation.  We also describe two algorithms for 
detecting such attacks, one deterministic and proved correct, the
other probabilistic and verified in simulation.
\end{abstract}

\subjclass{C.2.0 [Computer-Communication Networks]: General---Security and 
protection; K.4.1 [Computers and Society]: Public Policy Issues---Privacy.}
\keywords{Anonymity, denial-of-service, onion routing}
\titlecomment{Copyright ACM, 2013. This is the author's version of the work. It is posted here by permission of ACM for your personal use. Not for redistribution. The definitive version was published in \emph{Transactions on Information and System Security}, Volume 11 (2013), \url{http://dx.doi.org/10.1145/2382448.2382449}.}

\maketitle

\section{Introduction}

A low-latency anonymous communication system attempts to allow
near-real-time communication between hosts while hiding the identity
of these hosts from various types of observers (including each
other).  Such a system is
useful whenever communication privacy is desirable --- personal,
medical, legal, governmental, or financial applications all may
require some degree of privacy.  
Dingledine, Mathewson, and Syverson~\citeyearpar{tor-design} developed
the Onion Routing network \emph{Tor} for such
communication.  Tor anonymizes
communication by sending it along paths of anonymizing proxies,
encrypting messages in layers so that each proxy only knows its
neighbors in the path.

Syverson et al.~\citeyearpar{syverson-pet00} showed that such systems are
vulnerable to a passive adversary (one who does not modify traffic
in any way) who controls the first and last
proxies along such a path; roughly speaking, the attack involves a
cross-correlation of timing data.
Active attacks, and in particular, denial of service (DoS) attacks, can
increase the power of an otherwise limited attacker.  For example,
Dingledine, Shmatikov, and Syverson~\citeyearpar{sync-batching} analyzed the impact of DoS on different
configurations of mix networks.  The Crowds design paper~\citet{crowds:tissec}
examined the impact of circuit interruptions on anonymity.  And the
original Tor design paper describes various denial of service attacks.
More recently, Borisov et
al.~\citeyearpar{ccs07-doa} showed that an adversary willing to engage
in denial of service (DoS) could increase her probability of compromising
anonymity.  When a path is reconstructed after a denial of service,
new proxies are chosen, and thus the adversary has another chance to
be on the endpoints of the path.  

In this paper we analyze the denial-of-service attack in detail and
propose two detection algorithms.\footnote{We reported on preliminary
results in \cite{ddos-fc09}.}  
In Section~\ref{sec:borisov-attack}
we give a careful description of the attack in terms of a number
of parameters that the attacker might vary to avoid detection
(our model includes Borisov et al.'s attacker and the passive attacker
as special cases).  In Section~\ref{sec:effectiveness}
we assess the effectiveness of the
attacker as a function of these parameters.  We compare our analytic
results to a simulation of the attacker based on replaying data collected
from the deployed Tor network.
In Section~\ref{sec:detection-alg} we prove that an adversary engaging
in the DoS attack in an idealized Tor-like system can be detected by
probing at most $3n$ paths in the system, where $n$ is the number of proxies
in the system.
We give a more practical algorithm in
Section~\ref{sec:implementation}, implement it in simulation,
and show that it detects DoS attackers with low error rate.
In Section~\ref{sec:variants} we
discuss attackers that do not fit our model perfectly and show how
our detection algorithms might cope with such attackers.
In Section~\ref{sec:deployment} we discuss issues related to
deploying the detection algorithms we describe.
Finally, attacking and defending anonymity networks is an arms race;
in Section~\ref{sec:related} we discuss other attempts to detect and
defend against various kinds of attacks.  In particular, we
compare our more practical detection algorithm to the
``client-level'' algorithm for avoiding compromised tunnels
described by \citeN{das-borisov:securing-tor-tunnels} in that section.

\section{The Denial of Service Attack}
\label{sec:borisov-attack}

We model the Tor network with a fully connected undirected graph.%
\footnote{Some
individual Tor nodes may disable connections on specific ports or to
specific IP addresses.  We have not determined if these
significantly limit the graph.}
The vertices of the graph represent the Tor nodes (or relays), 
and the edges represent
network connections between nodes. 
We define $n$ to be the number of vertices.
For a DoS attack, we assume that the attacker controls some subset of
the relays; we may also use the term \emph{compromised} to describe
such relays.

A Tor client creates 
\emph{circuits} (also referred to as paths or tunnels) consisting of three
nodes;
in our model, this equates to  a path containing 
three vertices (in order) and the corresponding edges between them.  
The first node is referred to as the \emph{entry} node and the
last as the \emph{exit} node.
Application level communications between an initiator and a responder
are then passed through the circuit.  We
assume that 
if the adversary controls the entry and exit nodes on a circuit,
then she can in fact determine whether or not the traffic passing through the
entry node is the same as the traffic passing through the exit node
(and hence she can match the initiator with the recipient of the traffic).
An early version of such an attack is given by
\citeN{timing-fc2004} and a more sophisticated version 
by \citeN{murdoch-zielinski:pet2007}.
A circuit is \emph{compromised} if 
at least one node on the circuit is compromised
and the circuit is
\emph{controlled} if both the entry and exit nodes are compromised.

Syverson et al.~\citeyearpar{syverson-pet00} observe that 
if all nodes may act as exit nodes, then
a passive adversary controls a circuit with probability
$\frac{c^2}{n^2}$, where $c$ is the
number of nodes controlled by the attacker.  Since controlling
middle nodes is of little use, we might also consider an attacker
who selectively compromises exit nodes.  In this case
the probability of control is
$\frac{c^2}{nc'}$, where $c'$ is the number of compromised exit nodes.
Levine et al.~\citeyearpar{timing-fc2004} observe that if long-lived connections
between an initiator and responder are reset at a reasonable rate then
such an attack will be able to compromise anonymity with high probability
within $O(\frac{n^2}{c^2} \ln n)$ resets.

But Tor's node-selection algorithm is more sophisticated than
portrayed in these models.\footnote{This paper
refers to the algorithm and implementation of version 0.2.1.27 of Tor.}
Nodes are assigned flags by directory servers, and among these
flags are ``Guard'' and ``Exit.''  Entry and exit nodes will only be chosen
from nodes that are flagged Guard and Exit, respectively, and
a given guard, middle, or exit node is chosen with probability proportional to
its contribution to the total guard, network, or exit bandwidth.%
\footnote{Tor imposes a maximum ``believable'' bandwidth on any relay
when computing these probabilities; at the time we collected our
trace data that
we describe in later sections, this cap was $10$~MB/s, and we impose
this same cap on the bandwidths we recorded.}
Furthermore, unless the total bandwidth
contributed by exit nodes is greater than $1/3$ of the total network
bandwidth, exit nodes will not be chosen for any other position on a
circuit.  If the contribution is
$t > 1/3$, then the probability with which an exit node will
be chosen for a non-exit position is weighted by~$t-1/3$, so $t$ must
be significantly greater than $1/3$ for an exit node to be chosen in a
non-exit position with any non-trivial probability.  The same applies
to guard nodes.

Even further, guard nodes are not chosen each time a circuit is built
by a client.  Instead, a client constructs a list
of \emph{guard nodes} (currently of length $3$) 
chosen by the above algorithm when first started,
and thereafter always choses entry nodes uniformly at random
from that list when constructing circuits (guard nodes were
first described by \citeN{wright-sosp2003}).
New nodes are added to the list only if there are fewer than three
reachable nodes in the list, and a node is removed from the list only
if it has not been reachable for some time.
This protocol is intended to reduce the likelihood that a 
client eventually constructs a circuit that is controlled by an
attacker.  If new entry nodes were chosen for every circuit, then the
probability that a compromised entry node is chosen
converges to~$1.0$.  
The guard node list changes the calculus a bit:  the probability
of choosing a compromised node to put into the list is (hopefully)
low, and
if there are no such nodes in the list, then the client
will never be subject to a traffic confirmation attack of the sort
referenced earlier.
However, if there is a compromised node in the list, then
any circuit created by the client will have a compromised
entry node with fairly high probability.

In order to further improve the chances of controlling
a circuit, a number of researchers 
\cite{syverson-ieee06,low-resource-weps07,ccs07-doa}
have suggested that compromised nodes that occur on paths in which they
are not the first or last node artificially create a reset event 
by dropping the connection.  
Borisov et al.~\citeyearpar{ccs07-doa} analyze the following version
of this attack on Tor.  If the attacker controls one of the relays
of a given circuit, she can use various timing techniques to determine
if she controls both endpoints.  If she does not, she kills the circuit,
forcing the client to build another.
In the terminology we introduce below, 
Borisov et~al.'s attacker can be described
as killing circuits that are compromised but not controlled in an
effort to increase the number of circuits that she controls.

Killing a large number of circuits may make an attacker's nodes stand
out from other nodes on the network, so the attacker may try to
``fit in'' by not always killing
compromised but uncontrolled circuits.  In the notation introduced
below, the attacker may have $\killprob < 1.0$.  At the same time,
nodes on the Tor network do fail, for example because they have reached
bandwidth limits or because of network failures.  So an attacker might
also occasionally kill circuits that she controls in an attempt to
look ``more realistic.''  In the notation below, the attacker
may have $\permitprob < 1.0$.

So some relevant parameters for a denial-of-service attacker are 
the bandwidth contributed by attacker nodes (as this determines the
probability with which her nodes are chosen as part of a circuit)
and the probabilities of killing compromised uncontrolled circuits and 
permitting controlled circuits.
More specifically, we identify the following partial list of
parameters necessary to model the attacker:
\begin{enumerate}
\item $g$, the ratio of compromised guard bandwidth to total
guard bandwidth.
\item $e$, the ratio of compromised exit bandwidth to total
exit bandwidth.
\item $\killprob$, the probability that the attacker kills
a compromised but uncontrolled circuit.
\item $\permitprob$, the probability that the attacker permits
a circuit that she controls to be used.
\end{enumerate}
We make the following assumptions about the attacker:
\begin{enumerate}
\item $g>0$, $e>0$, and $\permitprob > 0$ (otherwise
the attacker can never control a circuit, or always kills any
circuit she controls).
\item The attacker only controls nodes with the Guard or
Exit flags (or both) set.
\item The attacker is local---i.e., the attacker can observe and
modify traffic passing through nodes she controls, but not other
traffic.
\end{enumerate}
So the attacker described by Borisov et al. has
$(\killprob, \permitprob) = (1.0, 1.0)$ and the passive
attacker who never kills any circuits
has $(\killprob, \permitprob) = (0.0, 1.0)$.
\begin{table}
\begin{tabular}{p{.25\textwidth}p{.65\textwidth}}
\multicolumn{2}{l}{\textbf{Parameters describing the attacker}} \\
$g$/$e$/$z$			& The ratio of compromised guard/exit/guard-exit
					  bandwidth to total guard bandwidth. \\
$\killprob$			& The probability that the attacker kills a compromised
					  but uncontrolled circuit. \\
$\permitprob$		& The probability that the attacker permits (does
					  not kill) a controlled circuit. \\
\multicolumn{2}{l}{\textbf{Parameters describing the network}} \\
$n$					& The number of relays in the Tor network. \\
$\gamma$/$\eta$/$\zeta$		& The ratio of guard/exit/guard-exit
					  bandwidth to total bandwidth. \\
\multicolumn{2}{l}{\textbf{Other quantities and terms}} \\
$K$					& The number of circuit-creation attempts a client makes 
					  before giving up. \\
\textit{Compromised circuit} 	& The attacker controls at least one relay
								  on the circuit. \\
\textit{Controlled circuit}		& The attacker controls the entry and exit
								  relays of the circuit.
\end{tabular}
\caption{Notation and terminology used in Section~\ref{sec:borisov-attack}.}
\end{table}

We also make the following assumptions about the Tor network:
\begin{enumerate}
\item The only reason for relay failure is a compromised relay
killing a circuit.
\item Paths are chosen with replacement.
\end{enumerate}
Of course, the first assumption is unreasonable, and the second
is false:  not only are paths chosen without replacement,
but in fact no two nodes on a path can belong to the same /16 subnetwork.
We make these assumptions about the network because modeling the
actual behavior in our analytic model of the attacker is extremely
difficult.  We assess the impact of these two assumptions on our
analytic model in Section~\ref{sec:sim-results}.

Tor's modified bandwidth-weighting algorithm 
chooses each relay with probability proportional to its weighted
bandwidth, where the weight assigned to a relay depends on
the position being selected for (guard, middle, or exit) and
the flags of the relay.  To fully describe the attacker,
we need to compute these probabilities.
A \emph{guard-only} relay is one with the 
Guard flag set and the Exit flag not set; this is
in distinction to a \emph{guard} relay, which is one
in which the Guard flag is set, irrespective of the
Exit flag.  We define \emph{exit-only} and \emph{exit} relays
similarly.
A \emph{guard-exit} relay is one with both flags set.
The bandwidth weights are defined in terms of the following values:
\begin{itemize}
\item $G$, $E$, and $T$, the guard-, exit-, and total bandwidth
of the network, respectively;
\item $\gamma = G/T$; $\eta = E/T$.
\item $w_{G_0} = 1-1/3\gamma$; $w_{E_0} = 1-1/3\eta$; and
$w_Z = w_{G_0}w_{E_0} = (1-1/3\gamma)(1-1/3\eta)$.
\end{itemize}
The various weights are given in Table~\ref{tbl:weights}.
\begin{table}
\begin{center}
\begin{tabular}{cc|cccc}
& & \multicolumn{4}{c}{Flags} \\
& & Guard-only & Exit-only & Guard-exit & None \\
\hline
& Guard	& 1.0	& 0.0	& $w_{E_0}$		& 0.0 \\
Position & Middle	& $w_{G_0}$	& $w_{E_0}$	& $w_Z$	& 1.0 \\
& Exit	& 0.0	& 1.0	& $w_{G_0}$		& 0.0
\end{tabular}
\end{center}
\caption{Bandwidth weights assigned by Tor based on position and
flags.}
\label{tbl:weights}
\end{table}

A node $R$ is chosen for a given position with probability
$b'/T'$, where $b'$ is the weighted bandwidth of $R$ and
$T'$ is the total bandwidth of all nodes weighted for that position.
So to compute the probability that a compromised node is chosen in any
position, we have to consider the possible combination of
flags of that node.
To that end, define
\begin{itemize}
\item $G_0$, $E_0$, and $Z$ to be the guard-only, exit-only, and
guard-exit bandwidth, respectively;
\item $\gamma_0 = G_0/T$; $\eta_0 = E_0/T$; $\zeta = Z/T$.
\item $G_0'$, $E_0'$, and $Z'$ to be the compromised guard-only, exit-only, and
guard-exit bandwidth, respectively;
\item $\gamma_0' = G_0'/T$; $\eta_0' = E_0'/T$; $\zeta' = Z'/T$.
\item $z = Z'/Z$; this give a parameter for compromised guard-exit bandwidth
corresponding to~$g$ and~$e$.
\end{itemize}
Some of these parameters are defined in terms of the others:
\begin{itemize}
\item $\gamma_0 = \gamma-\zeta$ and $\eta_0 = \eta-\zeta$.
\item $\zeta' = z\zeta$.
\item $\gamma_0' = g\gamma-z\zeta$ and $\eta_0' = e\eta-z\zeta$.
We can see this by noting that 
$\gamma_0'T = G_0' = gG-zZ$, from which
the expression for $\gamma_0'$ follows.  The expression for $\eta_0'$ is
derived similarly.
\end{itemize}

Thus the bandwidth contributed by guards and exits is
$G+E-Z = G_0 + E_0 + Z$.
We use these parameters to describe the probability of choosing
compromised relays as follows:
\begin{itemize}
\item A compromised guard node is chosen with probability
\[
g^* = \frac{G_0' + Z'w_{E_0}}{G_0 + Zw_{E_0}}
    = \frac{\gamma_0' + \zeta' w_{E_0}}{\gamma_0+\zeta w_{E_0}}
	= \frac{g\gamma - z\zeta(1- w_{E_0})}{\gamma-\zeta(1-w_{E_0})}.
\]
The first equality follows by dividing numerator and denominator by~$T$.
\item A compromised middle node is chosen with probability
\begin{align*}
m &= \frac{G_0'w_{G_0} + E_0'w_{E_0} + Z'w_Z}%
          {G_0w_{G_0} + E_0w_{E_0} + Zw_Z + (T - G_0+E_0+Z)} \\
  &= \frac{G_0'w_{G_0} + E_0'w_{E_0} + Z'w_Z}%
          {T - (G_0(1-w_{G_0}) + E_0(1-w_{E_0}) + Z(1-w_Z)} \\
  &= \frac{\gamma_0'w_{G_0} + \eta_0'w_{E_0} + \zeta' w_Z}%
          {1 - (\gamma_0(1-w_{G_0}) + \eta_0(1-w_{E_0}) + \zeta(1-w_Z))} \\
  &= \frac{(g\gamma-z\zeta)w_{G_0}+(e\eta-z\zeta)w_{E_0} + z\zeta w_Z}
          {1-\bigl((\gamma-\zeta)(1-w_{G_0}) + (\eta-\zeta)(1-w_{E_0}) + \zeta(1-w_Z)\bigr)}
\end{align*}
\item A compromised exit node is chosen with probability
\[
e^* = \frac{E_0' + Z'w_{G_0}}{E_0 + Zw_{G_0}}
    = \frac{\eta_0' + \zeta' w_{G_0}}{\eta_0 + \zeta w_{G_0}}
	= \frac{e\eta-z\zeta(1-w_{G_0})}{\eta-\zeta(1-w_{G_0})}.
\]
\end{itemize}

%
Taking into account the fact that the
bandwidth weighting factors
are defined in terms of the other parameters, a full specification
of the attacker for the purposes of our model consists of:
\begin{enumerate}
\item $g$, $e$, and $z$, the guard-, exit-, and guard-exit bandwidth
contributed by compromised nodes;
\item $\gamma$, $\eta$, and $\zeta$, the ratios of guard-, exit-, and
guard-exit bandwidth to total bandwidth of the network;
\item $\killprob$ and $\permitprob$, the kill- and permit- probabilities
of the attacker.
\end{enumerate}
This completes our description of the attacker.

\section{Effectiveness of the attack}
\label{sec:effectiveness}

\subsection{Theoretical results}
\label{sec:theory-eff}

We give a theoretical assessment of the effectiveness of the denial-of-service
attack in this section.  To do so, we fix the parameters describing
the attacker and consider the following experiment conducted by an 
imaginary client:
\begin{enumerate}
\item The client repeatedly chooses a path according to the path-selection 
algorithm described earlier until he choses a successful path.
\begin{itemize}
\item The path is unsuccessful if it is 
killed by the attacker (such a circuit must be compromised or controlled).
\item The path is successful otherwise.
\end{itemize}
\item If the client chooses~$K$ unsuccessful paths for some fixed~$K$, 
then he gives up.
\end{enumerate}
We say that the attacker eventually controls the client's path if
the client chooses a successful path that is controlled by the attacker.
We can then ask what the probability is that the attacker eventually
controls the client's path.

We wish to focus on the parameters~$g$, $e$, $\permitprob$, and
$\killprob$.  To do so,
we fix $\gamma=.70$, $\eta=.40$, and $\zeta=.30$,
values that we measured on the deployed Tor network in mid June~2011
(so we assume that the attacker's nodes do not have a significant
impact on the total guard or exit bandwidth fraction).

The values of $g$, $e$, and $z$ are not independent, because
compromised guard-exit bandwidth (as described by $z$) imposes a
lower bound on compromised guard bandwidth and exit bandwidth.
To achieve some desired values of $g$ and $e$, the attacker can
compromise relays with a strategy that ranges the spectrum from
compromising no guard-exit relays to compromising as many as possible.
Although the flags are assigned by Tor's directory authorities, it does
not seem difficult to configure a relay and behave in such a way that
either or both flags will be assigned.  For now, we focus on the
``prefer guard-exit'' strategy, returning to the ``avoid guard-exit''
strategy later.  To achieve
target compromise ratios $g$ and $e$, the attacker compromises
guard-exit relays until the compromised guard or exit bandwidth ratio
is either $g$ or $e$, respectively.  She then compromises/runs relays
of the other type until that desired bandwidth ratio is achieved.
In this case, $z$ is an ``observed'' value, which our analytic model
does not handle directly.  Based on simulations of this
strategy (see Section~\ref{sec:sim-results}), we have observed that this
approach typically results in $z\approx 1.5g$ 
when $g=e$ and $g\geq .05$ (the ratio is larger for smaller values
of $g$, reaching about $2.5$ for $g=.01$).  So for our analytic evaluation, 
we will take $z=1.5g$.

The appropriate value of~$K$ depends on the typical length
of time it takes for a circuit-creation attempt to fail and the time
it would normally take a client application to time out waiting for
a connection and hence give up.  
In our measurements, a failed circuit-creation attempt
either takes very little time (around $.5$ seconds) or a very long time
(around $60$ seconds).  Thus if there are attacks currently running,
we can model the attacker by taking $K$ between $2$ and $120$.
Since killing uncontrolled circuits quickly gives the attacker more
chances to control the client's circuit,
we choose~$K=120$.  Of course, this represents a \emph{very} efficient
attacker; we return to this choice at the end of
Section~\ref{sec:sim-results}.

We now determine the probability that the attacker eventually
controls the client's path.
If there are $0$ attackers in the client's guard-node list, then
the probability of eventual control is~$0$.
Since the attacker eventually controls the client's circuit if there
is some $i<K$ such that the
client chooses~$i$ unsuccessful paths followed by a path that is
controlled by the attacker,
the probability of
eventual control with $j\geq 1$ attackers in the guard-node list is
\[
\Pr[\text{even.\ ctrl., $j$ attackers}] =
\sum_{i=0}^{K-1}u_j^i\left(\permitprob\cdot\frac j3\cdot e^*\right) =
\left(\permitprob\cdot\frac j3\cdot e^*\right)\left(\frac{u_j^K-1}{u_j-1}\right)
\]
where $u_j = u_j(C)$ is the probability that the 
path~$C$ is unsuccessful when
the client has $j$ attacker nodes in his guard node list.  We compute
$u_j(C)$ as follows:
\begin{align*}
u_j(C) &= (1-\permitprob)\cdot\Pr[\text{$C$ controlled}] + \killprob\cdot\Pr[\text{$C$ compr., uncontr.}] \\
&= (1-\permitprob)\left(\frac j3 e^*\right ) +
 \killprob\cdot\left(\frac j3\Bigl(1-e^*\Bigr) + \Bigl(1-\frac j3\Bigr)e ^*+\Bigl(1-\frac j3\Bigr)(1-e^*)m\right).
\end{align*}
This is derived as follows.  $C$ is controlled if
the client chooses a compromised
guard node (probability $j/3$) and a compromised exit node
(probability~$e^*$).  $C$ is compromised and uncontrolled if either
the client chooses a compromised guard node and an uncompromised
exit node (probability $(j/3)(1-e^*)$), an uncompromised guard node and
a compromised exit node (probability $(1-j/3)(e^*)$), or uncompromised
guard and exit nodes and a compromised middle node
(probability $(1-j/3)(1-e^*)m$).

Taking into account the probability of having $j$ attackers in
the client's guard-node list,
\[
\Pr[\text{even.\ ctrl.}] =
\sum_{j=1}^3 {3\choose j}(1-g^*)^{3-j}(g^*)^j\left(\permitprob\cdot\frac j3\cdot e^*\right)\left(\frac{u_j^K-1}{u_j-1}\right).
\]

Figure~\ref{fig:ecp-ge} shows the contour plot of the eventual control
probability for the naive denial-of-service attacker
($(\permitprob,\killprob) = (1.0, 1.0)$) and the passive attacker
($(\permitprob,\killprob) = (1.0, 0.0)$) in terms of $g$ and $e$,
with the other parameters fixed as described above.  As expected,
the naive attacker controls significantly more circuits than the
passive attacker, consistently
about $2.5$ times as many, regardless of $g$ and $e$.  Perhaps something
that is not so obvious without this analysis is that for high
compromise ratios, the attacker gets more bang for her buck by
compromising additional exit bandwidth rather than guard bandwidth.
\begin{figure}
\centering
\subfloat[Naive attacker]{%
\includegraphics[]{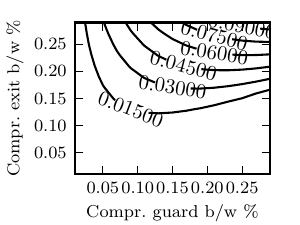}
}
\subfloat[Passive attacker]{%
\includegraphics[]{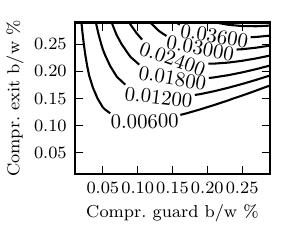}
}
\caption{\label{fig:ecp-ge}Eventual control probability for the naive
($(\permitprob,\killprob) = (1.0, 1.0)$)
and passive 
($(\permitprob,\killprob) = (1.0, 0.0)$)
attackers in terms of guard and exit node bandwidth
contributed by the attacker.}
\end{figure}

We can also vary $\killprob$ and $\permitprob$ while
keeping $g$ and $e$ constant.  Figure~\ref{fig:ecp-pq} shows the
contour plot of the eventual control probability for a 
low-resource attacker ($g=e=.01$) and a high-resource attacker
$(g=e=.10$).
\begin{figure}
\centering
\subfloat[Low-resource attacker]{%
\includegraphics[]{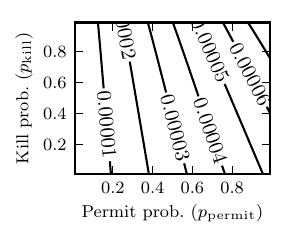}
}
\subfloat[High-resource attacker]{%
\includegraphics[]{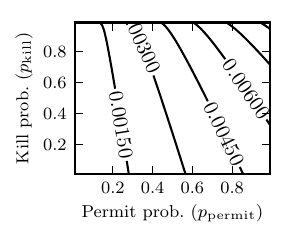}
}
\caption{\label{fig:ecp-pq}Eventual control probability for 
low resource ($g=e=.01$) and high-resource ($g=e=.1$) attackers.
}
\end{figure}
We see that the low-resource attacker with $(\permitprob, \killprob) =
(1.0, 1.0)$ eventually controls about $.007\%$ of circuits, whereas
the comparable high-resource attacker eventually controls about
$.9\%$ of circuits. Increasing compromised guard and exit
bandwidth by a factor of~$10$ increases the number of eventually-controlled
circuits by a factor of more than~$100$.\footnote{Because of the 
bandwidth capping employed by Tor during relay selection, such an
increase would probably have to be obtained by running more relays.}

So what are the resources required by our high-resource attacker?
At the time of our measurements, the guard-only, exit-only, and guard-exit
bandwidths of the deployed
network were about $1100$~MB/s, $251$~MB/s, and $751$~MB/s, respectively,
so our
attacker would have to provide about $185$~MB/s with guards
and $100$~MB/s with exits.  
An attacker following our strategy of preferring guard-exit relays
would therefore have to provide about
$100$~MB/s of guard-exit bandwidth and $85$~MB/s of guard-only bandwidth.
If she tries to keep a
low profile by running her nodes at the median bandwidth for each
type (about $435$~KB/s and $455$~KB/s for guard-exit and
guard-only, respectively), she would have to run about
$235$ guard-exits and $191$ guard-only relays.
If instead she runs nodes in the $90$-th percentile by bandwidth
(about $10$~MB/s and $9.5$~MB/s for guard-exit and guard-only, respectively),
she would need to run about
$10$ guard-exits and $8$ guard-only relays, which certainly seems
well within reason.
Our low-resource attacker really is low-resource, provided she
has sufficient bandwidth to run nodes in the $90$-th percentile:
she need only run one guard-exit and one guard-only relay.

However, these very low numbers of relays are misleading, because in practice
no relay can appear twice in a single circuit.  If the attacker
has very few nodes, then choosing paths
without replacement could have an adverse effect on the attacker.
For example, if the chosen guard makes up a significant contribution
to the attacker's guard-exit bandwidth, then the probability of
choosing a compromised exit may be much lower than that
predicted by this model.  
This is a general problem for
the attacker who has any guard-exit relays; if one of them is chosen
as a guard relay, then that decreases the attacker's available exit
bandwidth, thereby reducing the probability that the client's
circuit is eventually controlled.  The reduction in effectiveness
is significant in this analytic model.  We have performed the same
analyses with an attacker who compromises no guard-exit
relays, and she is predicted to be $5$--$8$ times more effective,
depending on the specific choices of the various parameters.
However, this is almost certainly an over-estimate, because our
analytic model assumes that there is enough bandwidth to meet the
attacker's goals.  This is not the case; for example, in the
data we use for our replay simulation in the next section, exit-only
relays with bandwidth in the $90$-th percentile and below yield
just $3.5\%$ of total exit bandwidth, so our attacker would not be
able to have $e > .035$ with this strategy.

\subsection{Simulation results}
\label{sec:sim-results}

As we have already mentioned,
our model makes a number of unrealistic assumptions.
It does not take into
account the fact that relays fail for reasons unrelated to an attacker;
for example, there may be transient network failures, a relay may
have reached its bandwidth cap, etc.
It assumes that paths are chosen with 
replacement,\footnote{To do otherwise would mean that our analytic model
would have to take into account the bandwidth contribution of a chosen relay.}
whereas Tor circuits are chosen without replacement.
And it assumes that there is sufficient bandwidth for the attacker
to compromise her desired ratios with any combination of guard-only,
exit-only, and guard-exit relays.
To assess the quality of our analytic model in light of these assumptions,
we implement a replay simulation as described below and compare the
proportion of eventually-controlled circuits predicted by the analytic
model to the proportion that are eventually controlled in the simulation.

Define a \emph{lifecycle} for a relay~$R$ to be a function
$\ell_R:\set{0,1,\dotsc}\to\set{-1, 0, 1}$.  The idea is that we
``probe'' $R$ some number of times, and $\ell_R(t)$ is the result of the
$t$-th probe.  A probe consists of constructing a circuit
of the form~$(G, R, E)$ and downloading a small file through the circuit,
where $G$ and $E$ are relays that we control.  Probe~$t$ \emph{succeeds}
($\ell_R(t) = 1$) if the file is successfully downloaded and otherwise
the probe \emph{fails}.  A probe may fail because $R$ is not in 
the consensus at time~$t$ ($\ell_R(t) = -1$) or for some other reason
such as a transient network failure, bandwidth limiting, etc.\
($\ell_R(t) = 0$).  A \emph{trial} consists of probing each
relay in the network---i.e., 
$\setst{\ell_R(t)}{\text{$R$ a relay}}$ for some fixed $t$.
We collect lifecycle data on each relay in the
deployed network by conducting some number of trials; for the
results reported here, we conducted $100$ trials over a period
of about $48$ hours.

With this lifecycle data in hand, we can simulate the denial-of-service
attack as follows:
\begin{enumerate}
\item Mark some number of the relays that are in the consensus in the first
trial as attackers; these relays are chosen according to requested
values of the parameters
$g_0$, $e_0$, and $z$ as described in our model of the attacker.%
\footnote{We ignore the distribution of IP addresses.}
Because relays have discrete bandwidths, the actual ratio of
compromised bandwidth will differ somewhat from these requested values.
We compromise relays starting at the top $90$-th percentile of bandwidth
without regard to actual reliability.
\item Attempt to build a circuit as follows:
\begin{enumerate}
\item Select $3$ guard relays from those relays~$R$ that have the
Guard flag set in trial~$0$.
\item Choose a trial~$t$ at random and try to build a circuit up
to some maximum number of times (corresponding to the parameter~$K$
in our analytic model).
\begin{enumerate}
\item Select an entry relay uniformly at random from the $3$ guard relays.
\item Select middle node and exit relays from those relays $R$
such that $\ell_R(t) \not= -1$ (exit relays must have the Exit flag
set).  Choose these two relays so that all three relays are distinct.
\item If any of the relays~$R$ has $\ell_R(t) = 0$, the build attempt fails;
try again.
\item If the circuit is compromised but not controlled, the
attacker kills it with probability $\killprob$; if it is killed, try again.
\item If the circuit is controlled, the attacker kills it
with probability $1-\permitprob$; if it is killed, try again.
\item Otherwise, the build attempt is successful.
\end{enumerate}
\end{enumerate}
\end{enumerate}
We can then analyze how many of the circuit construction attempts
result in circuits that are eventually controlled by the attacker.

We show one such comparison in Figure~\ref{fig:theory_v_sim}.  For
this analysis we consider compromise ratios $r\in[0.0, .10]$.  
We simulate the construction of many circuits with
$g = e = r$ and an attacker who prefers guard-exit relays to guard-only
or exit-only as described in the previous section.
We then set $g'$, $e'$ and $z'$ to be the actual compromised
bandwidth ratios in the simulation and compute the analytically-predicted
compromise ratio with these values (network parameters such as $\gamma$, etc.\ 
are taken to be the corresponding values in the first trial of the
replay data).
As we can see, the analytic model matches the simulation quite closely.
In particular, our unrealistic assumptions about the Tor network
do not appear to significantly impact the quality of the analytic
model of the denial-of-service attacker.
We have performed a similar analysis with the attacker who compromises
no guard-exit relays.  The simulated attacker achieves an
eventually-controlled rate of about~$.02$ for $r=.10$.  As discussed
in the previous section, this increase is not as dramatic as that
predicted by the analytic model, because our simulated attacker is
restricted by the actually-available bandwidth, and hence can
compromise at most $3.5\%$ of the total exit bandwidth.
As expected, the analytic model more consistently
over-estimates the effectiveness of the attacker.
\begin{figure}
\begin{center}
\includegraphics[]{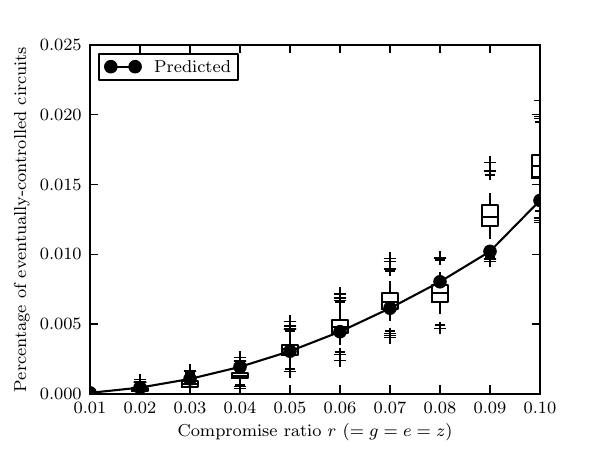}
\end{center}
\caption{Comparison of analytic model and simulation.  The boxes show the
interquartile range and median value of the percentage of eventually-controlled
circuits.  See Appendix~\ref{sec:about_data} for details.}
\label{fig:theory_v_sim}
\end{figure}

Returning to the choice of~$K$, it turns out that the value (in the
range $[2, 120]$) has little impact on the effectiveness of the
attacker as predicted by the analytic model.
The plots corresponding to
Figures~\ref{fig:ecp-ge} and~\ref{fig:ecp-pq} are almost unchanged
when we set~$K=2$;
the greatest change is in Figure~\ref{fig:ecp-ge}(a), which has the
same general contours, but starting with the lowest contour at~$.01$
and the highest at~$.06$.  We might expect that the assumption in
the model that circuits only fail because they are killed by the
attacker to have more impact with lower values of~$K$, since now
it seems much more likely that the client would give up before the
attacker could control a circuit.  This is indeed the case; with
$K=2$ the analytic model consistently over-estimates the effectiveness
of the attack as implemented in simulation.  However, the over-estimation
is by a relatively small amount.  Analyzing the simulated model with
$K=120$, we also see that almost all attempts to build a successful
circuit (controlled or not) produce one in~$\leq 15$ attempts.  
Comparing the analytic model to simulation with~$K=15$ yields a
comparison very close to that shown in Figure~\ref{fig:theory_v_sim}.

\subsection{How much is enough?}

It is natural to ask at this point whether any version of the attack
is ``effective.''  In other words:  how high must the eventual control
probability be for the attack to be considered to be a success?  We
do not give a specific answer to this question, because
it seems that it depends on the goals of the attacker, but we can
consider a couple of scenarios.

Suppose a ``script-kiddie'' just wishes to make some connections between
clients and servers, uninterested in the specific identity of either.
Then practically any eventual-control probability will do the job.  In
this case, of course, a passive attack is the route to take.

Suppose a crime-fighting unit or a repressive regime wishes to identify
some (initially unknown) users of a specific service.  At first blush,
it is not clear that denial-of-service is of any great help here,
because the users are likely to have no compromised guard nodes;
though such users will have a harder time connecting to the service,
they will be no more easily identified.  But if the goal is to make
high-profile ``examples'' of just a few users, then even a modest success
rate could be sufficient, provided it is high enough to identify users
within the regime's jurisdiction.
We address the scenario of deploying denial-of-service against a
targeted individual in Section~\ref{sec:variants}; such an attacker
is likely to have more global resources at her disposal than we are
considering here.

\section{Detecting the Attack}
\label{sec:detection-alg}

In this section we show how to detect a DoS attack as described in
the previous section.  
Briefly, the detection algorithm makes $O(n)$ probes of the network, where
a probe consists of setting up a circuit and passing data through it.
By analyzing the successful and failed probes, we can identify nodes
involved in such an attack if they exist.
We make the following assumptions
about the Tor network and the attacker:
\begin{enumerate}
\item The length of the paths used by the Tor implementation under
attack is fixed independent of (and strictly less than) $n$ and that
paths consist of distinct nodes.
\item The attacker is described by $(\killprob,\permitprob) = (1.0, 1.0)$
(the other parameters are unknown).
\item The number of compromised nodes is at least $2$ but less than $n$.  Both
bounds are reasonable, since at least two compromised nodes are required to
perform the underlying traffic confirmation attack on typical
circuits,\footnote{As shown by \citeN{syverson-ieee06},
hidden servers are vulnerable to single-node traffic confirmation attacks.}  
and an anonymity
network composed entirely of compromised nodes is of no value to an
honest user.
We address this assumption further after the proof of the theorem.%
\footnote{%
The algorithm presented
in Section~\ref{sec:implementation} does not have this restriction.}
\item The only reason
a probe fails (i.e., the circuit setup fails or the circuit dies while
data is being passed through it)
is because it is killed by an attacker on the circuit.
Of course, this ignores the fact that honest nodes may also fail,
whether due to traffic overload, intentional shutdown, etc.; we discuss
how to handle this after the proof of the theorem.
\end{enumerate}

\begin{theorem}
\label{main-theorem}
Under the above assumptions we can detect all of
the compromised nodes of the Tor network in $O(n)$ probes. 
For the case of paths of 
length~$3$ the number of probes required is at most $3n$.
\end{theorem}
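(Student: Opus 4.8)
The first thing I would do is pin down exactly what a single probe reveals. Under $(\killprob,\permitprob)=(1.0,1.0)$ and the failure assumption, a probe of a path $(a,b,c)$ \emph{succeeds} precisely when the circuit is either uncompromised (all three nodes honest) or controlled (both endpoints $a$ and $c$ compromised), and it \emph{fails} precisely when the circuit is compromised but not controlled. In particular the status of a middle node matters only when it is the \emph{only} compromised node on the path. This characterization immediately yields the easy half of the argument: \emph{once two honest nodes $h_1,h_2$ are known}, every other node $x$ can be classified with one probe, namely $(h_1,x,h_2)$, which succeeds iff $x$ is honest. So the whole problem reduces to a bootstrap: find two honest nodes (or otherwise fully label the network) within the probe budget, after which $\le n$ further probes finish the job, for a total of at most $3n$.

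For the bootstrap I would use a single linear \emph{scan}: fix an entry node $s$ and a middle node $r$ and probe $(s,r,y)$ as the exit $y$ ranges over the remaining $n-2$ nodes. The characterization above splits the behavior cleanly: if $s$ is compromised the probe succeeds iff $y$ is compromised; if $s$ and $r$ are both honest it succeeds iff $y$ is honest; and if $s$ is honest while $r$ is compromised every probe fails. Reading off the outcome pattern gives a trichotomy. If the scan is \emph{mixed}, it partitions the $y$'s into a success set $S$ and a fail set $F$, one of which is exactly the honest nodes and the other exactly the compromised nodes; a single additional probe resolves which is which --- take two nodes $p_1,p_2$ from whichever of $S,F$ has size at least two and a node $w$ from the other, and probe $(p_1,w,p_2)$: it succeeds iff $p_1,p_2$ are compromised (controlled circuit) and fails iff they are honest. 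If the scan is \emph{all-success}, the characterization forces the unique scenario that $s$ is the lone exception and every node except $r$ is compromised, so the labeling is already determined.

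The genuine difficulty is the \emph{all-fail} case, and it is the heart of the theorem: all-fail is consistent with three distinct scenarios --- ($t_s$ honest, $t_r$ compromised), ($s,r$ both compromised with everything else honest), and ($s,r$ both honest with everything else compromised). This is precisely the global honest$\leftrightarrow$compromised complementation ambiguity noted in the footnote (and the reason we assume at least two but fewer than $n$ compromised nodes). To break it I would run a second, mirrored scan $(r,s,y)$ over the remaining exits; a short case check shows this scan produces at least one success exactly in the first scenario (in which case $r$ is compromised and the scan directly labels every $y$ by success$=$compromised), and fails everywhere in the other two. In that last situation we know the network has exactly two honest or exactly two compromised nodes, namely $s$ and $r$, with all other nodes of one uniform type; a single probe $(p,s,q)$ on two of those uniform nodes $p,q$ distinguishes the two cases and completes the labeling. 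The main obstacle, then, is designing these disambiguation probes and verifying the case analysis is exhaustive and that each degenerate possibility is excluded by the counting bounds ($\ge 2$ compromised, $\ge 1$ honest); I would treat the finitely many very small $n$ separately by brute force.

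Tallying the probes, every branch costs at most two scans of length $n-2$ plus a constant number of disambiguation probes, comfortably within the claimed $3n$ for length-three paths. Finally, I would remark that the success characterization is identical for any fixed path length $k<n$ (success iff all honest or both endpoints compromised, the interior positions playing the role of the single middle), so the same primitives --- a linear scan to bootstrap and single-node tests once enough honest nodes are known --- give detection in $O(n)$ probes for any fixed $k$, with a constant depending on $k$.
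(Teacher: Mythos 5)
Your proposal is correct, and while it shares the paper's overall template---fix $k-1$ nodes, scan the remaining $n-k+1$ nodes through one slot of the circuit, split into all-succeed/mixed/all-fail cases, and use the cardinality bounds (at least $2$, fewer than $n$ compromised) to break the global honest/compromised complementation ambiguity---you instantiate the scan differently, and the difference has real consequences. The paper fixes \emph{both endpoints} and scans the unknown node through the \emph{middle} position: probes $(x_1,y,x_2)$. With that choice, all-succeed only forces both endpoints to be compromised and says nothing about the individual $y$'s, so a second scan $(x_1,x_2,y)$ is required; the mixed case is free; and all-fail requires two further exit-varying scans, yielding the paper's $3(n-2)\leq 3n$ bound. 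You instead fix entry and middle and scan through the \emph{exit} position, so each probe directly tests $y$'s status as an endpoint whenever $s$ is compromised or both $s,r$ are honest. This flips which cases are cheap: your all-success case labels everything immediately (all nodes except $r$ compromised, with $r$ honest forced by the fewer-than-$n$ bound), your mixed case costs only $O(1)$ disambiguation probes, and your all-fail case needs just one mirrored scan $(r,s,y)$ plus the single probe $(p,s,q)$, giving a worst case of about $2n$ probes---strictly better than the paper's $3n$.

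Three blemishes, none fatal. First, in the all-success case you wrote that ``$s$ is the lone exception''; it is $r$ that is the lone honest node (everything except $r$, including $s$, is compromised). Second, in the mixed case, when the success set turns out to be the compromised one, your scan never classifies the fixed middle node $r$; one extra probe such as $(r,w,c)$ with a known compromised node $c$ at the exit (success if and only if $r$ is compromised) closes this, and note that your stated primitive of ``two known honest nodes'' does not suffice by itself when only one honest $y$ exists. Third, your general-$k$ paragraph is thinner than it looks: in the all-fail case a single mirrored scan cannot resolve the scenario ``$s$ honest, \emph{some} interior node compromised,'' since swapping $s$ with one interior node leaves the others untested as endpoints; you need up to $k-1$ scans, one per fixed node placed at the entry position, followed by a constant number of disambiguation probes. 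This still fits within $O(n)$ for fixed $k$, comparable to the paper's roughly $\bigl(\binom{k-1}{2}+k-1\bigr)(n-k+1)$ extra probes in its hardest case.
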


\begin{proof}
Let $k$ be the length of the paths used by the Tor implementation under
consideration. 
We denote the probe consisting of the path
of length $k$ starting with $u_1$ and ending with $u_k$ with edges between
$u_i$ and $u_{i+1}$ for $i = 1, \ldots , u_{k-1}$ by $(u_1, \ldots, u_k )$.
We say a probe {\em succeeds} if the circuit is not killed, otherwise it
{\em fails}. 

Choose a set $X = \{ x_1, \ldots, x_{k-1} \}$ of $k-1$ distinct nodes, 
arbitrarily.  Perform the following set of probes:
$(x_1, y, x_2, \ldots , x_{k-1} )$ for each $y$ not in $X$.
One of three cases results.

\paragraph*{Case 1: All $n-k+1$ probes succeed}
In this case both $x_1$ and $x_{k-1}$ must be
compromised (if one is, then every probe is compromised but uncontrolled;
if neither is, then at least one probe is compromised but uncontrolled; in
either case, not all probes succeed).
For any node $y\notin X$, $y$ is compromised if and only if
the probe $(x_1, \ldots, x_{k-1}, y )$ is successful.
To test nodes in $X$, 
fix any $x\notin X$ and consider
probes of the form $(x_1,\dots,x_{i-1},x,x_{i+1},\dots,x_{k-1},x_i)$ for
each $x_i\in X$, $2\leq i<k-1$; again, $x_i$ is compromised if and only
if this probe is successful.

\paragraph*{Case 2: Among the $n-k+1$ probes, at least one succeeds and at 
least one fails}
If either endpoint were compromised, then either all probes would succeed
(if the other endpoint were compromised) or all probes would fail
(if the other endpoint were uncompromised).  Thus neither endpoint is
compromised.  But then if any of $x_2,\dots,x_{k-2}$ were compromised
every probe would fail.  Thus
in this case all of the nodes in $X$ are uncompromised, any $y$ for which
the probe failed is compromised, and any $y$ for which the probe succeeded
is uncompromised.

\paragraph*{Case 3: All $n-k+1$ probes fail}
In this case we can conclude that either
all nodes in~$X$ are uncompromised and all nodes not in~$X$ are compromised,
or at least one
of the nodes in $X$ is compromised (otherwise all nodes in~$X$ and some node
not in~$X$
are uncompromised, so at least one probe succeeds).
For each pair of nodes $x_i, x_j\in X$ consider
probes of length~$k$ of the form $(x_i,y,\dots,x_j)$, where positions~$3$
through~$k-1$ consist of $X\setminus\{x_i,x_j\}$ in an arbitrary fixed order
and $y$ ranges over nodes not in~$X$.
Suppose that for some pair $x_i,x_j\in X$ all
probes succeed.  
This second round of probes is the same as the first, but with
a different arrangement of the nodes in~$X$.  Thus the same reasoning
as in Case~1 lets us conclude that $x_i$ and $x_j$ are compromised
and we proceed as in that case to determine the status of the remaining
nodes.
Otherwise, for each pair $x_i, x_j\in X$ there is $y\notin X$
such that the probe $(x_i,y,\dots,x_j)$ fails.  In this
case, if there is at least one uncompromised node in~$X$, then
there is exactly one uncompromised node in~$X$.  Now we consider
probes of length~$k$ of the form $(x,\dots, y)$, where $x\in X$,
positions $2$ through~$k-1$ consist of $X\setminus\{x\}$
in an arbitrary fixed order, and $y$ ranges over nodes not in~$X$.
Suppose every probe of the form $(x,\dots,y)$ fails.  If there were
exactly one compromised node in~$X$, then necessarily every node not in~$X$
is uncompromised, which means that there is exactly one compromised node in the
entire network, violating our assumption that there are at least two
such nodes.\footnote{The full attack is impossible with a single 
compromised node,
though an adversary could still perform an occasional denial of
service with one such node.  A single compromised node could be
detected in a number of probes linear in $n$, though we omit the
details here.}  
Thus we conclude that if all probes $(x,\dots,y)$ fail, then
no nodes in~$X$ are compromised and all nodes
not in~$X$ are compromised.
Otherwise there are $x\in X$ and $y\notin X$ such that $(x,\dots,y)$
succeeds.  Suppose $x$ were not compromised.  Then there would be a
compromised node in $X\setminus\{x\}$ or $y$ would be compromised;
in either case the probe $(x,\dots, y)$ would fail, a contradiction.
So $x$ is compromised and hence $x$ is the only compromised node in~$X$.
Furthermore, the compromised nodes not in~$X$ are precisely those~$y$
such that the probe $(x,\dots, y)$ succeeds.

\paragraph*{Analysis}
The worst case number of probes occurs in Case~$3$ in which we do at most
$ ({k-1 \choose 2} + k - 1) (n - k+1)$ probes beyond the initial $n-k+1$
probes that define the cases.\footnote{Since some probes will be repeated,
the actual number can be made a bit smaller.}
As $k$ is assumed to be fixed independent of
$n$ this is clearly $O(n)$. For the case $k=3$ (the default for Tor), 
we notice that the initial set of probes and the first set of
probes in Case~$3$ are the same, so 
we conclude that the total number of probes is~$\leq 3n$.
\end{proof}

What happens if we apply this algorithm, but there are
no compromised nodes?  Case 1 of the proof applies, and since
every probe described in that case would succeed, we would conclude
that every relay in the network is compromised.  
In fact, the same applies if all nodes are compromised.
At this point,
presumably a human would step in to determine whether it is more likely that
no relays are compromised or the entire network is, and take
action accordingly.

A concern with this detection algorithm is that if $x_1$
is a compromised relay, then
the attacker likely notices that she is the entry guard in a
sequence of circuits in which the middle nodes traverse the
entire network.  Presumably the attacker stops killing circuits,
so we follow Case~2; we end up concluding that $x_1$ is 
uncompromised, and a further side-effect is that $x_1$ effectively
ends up framing other (uncompromised) nodes.  But how likely is
this scenario?  The probability that $x_1$ is compromised is the
fraction of compromised guard nodes (by number, not bandwidth).
Assuming some degree of human intervention, and assuming that
a relay must be identified as compromised multiple times,
the attacker escapes detection only if we repeatedly choose
her nodes in the set~$X$, which happens with low
probability.

Now we discuss how to handle the situation in which a probe may
fail for reasons unrelated to an attacker (e.g., an honest node
may fail, or there may be a transient network failure on one 
of the links).  The problem is that the detection algorithm cannot
tell what the source of the failure is.  We now define a probe to
consist of $r$ attempts to create the specified circuit, where $r$
depends on the failure rate of circuits (compromised
or not)
and the probability of error in the algorithm we find acceptable.
We report that the probe fails if all $r$ of the attempts fail, and
otherwise that it succeeds.  

We say that a probe is \emph{wrong}
if it fails but either the circuit is uncompromised or it is controlled.
Since $(\killprob,\permitprob) = (1.0, 1.0)$,
a probe consisting of $r$ independent
trials can be wrong only if (a) an honest circuit fails $r$ times in
a row or (b) a circuit with both end points compromised fails $r$ 
times in a row. 
Assume that any given circuit fails due to unreliable
nodes or edges with probability $f$.
Then, under the independence assumption, 
(a) or (b) occur 
with probability at most $f^r$, i.e., the probability that a probe
consisting of $r$ independent trials is correct is at least $1 - f^r$. 
If the algorithm performs $m$ such probes (i.e., probes $m$ circuits
overall) the probability they
are all correct is greater than $(1-f^r)^m$. Assume we require that
our algorithm correctly identifies all nodes as either honest or
compromised with probability at least $1 - \epsilon$. Then it is
easy to see (using standard approximations) that choosing 
$$r>\frac{\ln \ln (\frac{1}{1-\epsilon}) - \ln m}{\ln f}$$
is sufficient.\footnote{This bound follows from $(1-\epsilon) < (1-f^r)^m < e^{-(mf^r)}$.}

We can use our replay data to gain some insight into an appropriate
value for~$f$.  In Figure~\ref{fig:circuit-failures}, we show the
failure rate for circuit construction.  For each trial we constructed
a number of circuits by choosing three relays at random (respecting
Guard and Exit flags as appropriate) and declaring the circuit a
success if all three relays were successfully probed, and a failure
otherwise.\footnote{We have verified that this experiment predicts
circuit construction success/failure with high probability.}
We choose the relays uniformly at random (i.e., without
bandwidth-weighting), because the detection algorithm does not use
bandwidth weighting to construct its circuits.
For the purpose of choosing a lower bound
on~$r$, it suffices to find a reasonable upper bound on~$f$; from
our data, taking~$f=.45$ suffices.\footnote{In
\cite{ddos-fc09}, we indicate a failure rate of~$.2$.  In that paper
we were considering circuits as constructed by Tor using its bandwidth-weighting
algorithm.  Here we are looking at circuits constructed at random, which
are less likely to be reliable.}
If we also take
$m=7500$ (the worst-case number of probes for a $2500$-node Tor network)
and $\epsilon = .0004$ (so that we
expect less than one misidentification) we see that $r = 21$ is sufficient.
So on the deployed network, this modified algorithm would perform
$\leq 3rn = 63n$ probes.
\begin{figure}
\begin{center}
\includegraphics[]{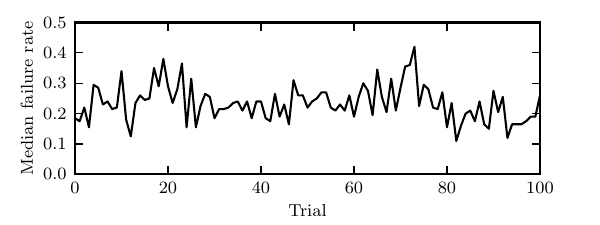}
\end{center}
\caption{Median circuit failure rates from replay data.  For
details, see Appendix~\ref{sec:about_data}.}
\label{fig:circuit-failures}
\end{figure}
Of course, we require that the above repeated attempts be independent
which is unlikely to be the case. But by spreading the repetitions out
over time we can increase our confidence that 
observed failures are not caused by randomly-occurring transient network 
failures, bandwidth limits on relays, etc.

\section{Detection in Practice}
\label{sec:implementation}

\subsection{A ``bad-relay, good-relay'' detection algorithm}
\label{sec:practical}
The detection algorithm described in Section~\ref{sec:detection-alg}
along with the measurements made above provide a reasonably practical
method for detecting the DoS attack in progress.
We can handle non-naive attackers and
reduce the number of probes of the network significantly
if we are willing to accept probabilistic detection and assume
the existence of a single honest router under our
control.  
This single honest router is a trustworthy guard
node~\cite{wright-sosp2003}.  This trust is important:  
Borisov et al.~\citeyearpar{ccs07-doa}
note that the use of (untrusted) guard nodes in general may make the
adversary more powerful when performing the predecessor
attack~\cite{wright-ndss2002}, but the assumption of a trusted guard node
avoids this problem entirely.  
By ``trusted'' we
mean that the node itself is not under the control of an attacker.
This can be arranged by installing one's own router and using
it as the guard node.
The adversary must not be able to distinguish this node from
other guard nodes on the network, for otherwise she
can choose to not attack connections from the trusted node
and remain hidden.  
Although this assumption is unrealistic with respect to a global
adversary that can observe all network traffic (because
of the very specific traffic patterns coming out of this node),
we are assuming that our adversary is local.\footnote{Tor is typically
assumed to be defenseless against a global passive adversary,
and hence such an adversary would have no need of denial-of-service
attacks.}  Furthermore, we are not arguing that every user should
have a trusted guard node, but rather just the user or organization
running the detection algorithm we describe here 
(see Section~\ref{sec:deployment} for more discussion).

The simplified detection algorithm works as follows.
First, query the Tor directory servers for a list of exit nodes,
possibly restricted by requiring some degree of stability according
to the various flags associated to each node.
Call this list of nodes
the \textit{candidate exits}.  Then, repeat the following steps $l$ times
for some value of $l$:
for each candidate node, create a circuit
where the first node is our trusted node and the second is a
candidate.  Retrieve a file through this circuit, and log the results.
Each such test either succeeds completely, or fails at some point,
either during circuit creation or other initialization, or during the
retrieval itself.  Either failure mode could be the result of a
natural failure (e.g., network outages, overloaded nodes), or an
attacker implementing the DoS attack.  
A candidate node with a high failure rate
is a \textit{suspect exit}; this failure rate can be tuned with the usual
trade-off between false positives and negatives.
Repeat an analogous process to create a list of \emph{suspect guard} nodes;
this time the circuit starts at a guard node chosen at random
and exits at our trusted node.

Once the lists of suspect guards and exits 
are generated, the following steps are
repeated $l'$ times for an appropriately chosen
$l'$.  Each possible pairing of a suspect guard and suspect exit is used
to create a circuit of length two.\footnote{An attacker controlling both
endpoints might notice that there is no middle node in such a circuit and 
kill it to defeat this detection algorithm.  This can be handled by inserting a
middle node under our control.  Alternatively, one could choose the
middle node from among the candidates not labeled as suspicious, so
as to further obscure the fingerprint of the detection algorithm.}
As above, the circuits thus
created are used to perform a retrieval, and the successes and
failures are logged.  In this set of trials, we are looking for paths
with low failure rates over the $l'$ trials.  Nodes on such paths
could be under control of the adversary, and are termed
\textit{guilty}.  

This detection algorithm performs at most $ln + l'n^2$ probes
of the network.  From the simulation results described next, we can
take $l = l' \leq 15$.  Furthermore, the number of suspects is usually
much less than $n$; we will see that it is typically about $n/10$.  Finally,
we have also determined that instead of considering every pairing of a suspect
guard and exit, for each suspect we can choose $20$ relays of
the complementary type at random and consider the $20$ corresponding pairs.
Putting all this together results in a detection algorithm that performs
$\leq 17n$ probes of the network, as compared to the $63n$ probes required
by the algorithm of the previous section.

\subsection{The algorithm in simulation}

We implement this detection algorithm against our simulation
of the denial-of-service attack described in
Section~\ref{sec:sim-results}.  Our implementation is as follows:
\begin{enumerate}
\item Mark some number of relays as attackers as described in
Section~\ref{sec:sim-results}, choosing as many guard-exit relays
as possible.\footnote{We have also analyzed this algorithm with the
attacker who compromises no guard-exit relays; the results are practically
identical.}
\item Choose a \emph{suspect cutoff rate} ($\scr$) and a \emph{guilty
cutoff rate} ($\gcr$).
\item Perform suspect-node detection:
\begin{enumerate}
\item Choose $l$ equally-spaced trials~$t_{s,0},\dotsc,t_{s,l-1}$ for
some~$l$.  
\item For each relay~$R$ with either the Guard or Exit flag set
in at least one of the $t_{s,i}$ and such that $\ell_R(t_{s,i}) \not= -1$ for
some~$i$, define
the failure rate for $R$ to be $1-m/n$, where $m$ is the number
of times $\ell_R(t_{s,i})=1$ and $n$ is the number of times
$\ell_R(t_{s,i}) \geq 0$.
Mark $R$ as a \emph{suspect} if its failure rate is~$\geq \scr$.
\end{enumerate}
\item Perform guilty-node detection:
\begin{enumerate}
\item Choose $l'$ equally-spaced trials~$t_{g,0},\dotsc,t_{g,l'-1}$ for
some~$l'$ with $t_{s,l-1}<t_{g,0}$.
\item For each pair of suspect relays $R$ and~$R'$ such that~$R$ is
a guard and $R'$ an exit in trial~$0$, define the failure rate for
the pair~$(R, R')$ to be $1-m/n$, where $m$ is the number of times
both relays are in the consensus and $(R, R')$ is successful and $n$
is the number of times both relays are in the consensus.  A pair is successful
if both relays were successfully probed and $(R, M, R')$ is not killed
by the attacker, where $M$ is a relay that we control (and hence is
always up and not an attacker).  The pair~$(R, R')$ is \emph{guilty} if
its failure rate is~$\leq \gcr$.
\item Label the relay~$R$ as \emph{guilty} if there is a guilty pair
$(R, R')$ or $(R', R)$ for some~$R'$.
\end{enumerate}
\end{enumerate}

We choose different trials for suspect and guilty node detection, because
the latter must be started after the former has completed.  
We choose a starting trial so that all suspect and guilty node detection
trials can be completed before the end of the replay data.
In our implementation, we take $t_{s,i+1} = t_{s,i}+1$, $t_{g,i+1}=t_{g,i}$,
and $t_{g,0} = t_{s,l-1}+1$ and choose $t_{s,0}$ so that
$t_{s,0}+l+l'$ is less than the total number of trials in our data.

In either phase, a
false positive is an honest relay that is labeled a suspect or
guilty, and a false negative is a compromised relay that is not so labeled.
A higher suspect cutoff rate reduces the number of relays
marked as suspects, whereas a higher guilty cutoff rate increases the
number of relays marked as guilty.  Therefore increasing~$\scr$
decreases the false-positive rate and increases the false-negative rate,
whereas increasing~$\gcr$ increases the false-positive rate and
decreases the false-negative rate.
If we assume that the attacker operates naively
(i.e., $(\killprob, \permitprob) = (1.0, 1.0)$) and that her relays are
perfect (always in the consensus and never fail), then setting
$\scr = 1.0$ will minimize the number of false positives without
admitting any false negatives.
This is because compromised relays will always fail,
whereas an innocent relay has to succeed just once to not be marked
as a suspect.  
Perfection seems unlikely, so instead we will consider an attacker
whose relays are \emph{reliable}, in that they are
simultaneously in the top $75$\% of relays
ranked by bandwidth and by number of times in the consensus (in our
simulations, the attacker
compromises reliable relays starting at the $90$-th percentile of
bandwidth).  Although this
does not seem like a strong restriction for reliability, in fact it turns
out that we have no false-negative suspects if the attacker meets this
condition even when $\scr=1.0$.
Thus the attacker must either run relays that are rarely
in the consensus (of dubious value for the attacker) or our algorithm will
label all attacking relays as suspects.  

There will still be false positives in the suspect labeling;
these are relays that are honest but unreliable, and hence have a
high ``natural'' failure rate.  These will be filtered out
during guilty-node detection, which we can see as follows.  Let $R$
be such an unreliable honest relay.  Consider any pair of suspects
$(R, R')$.  Since $R$ is unreliable, this pair will almost never
succeed, either because $R$ is out of the consensus, or $R$ is in the consensus
but fails (it does not matter whether $R'$ is honest or compromised).
Thus $(R, R')$ has a high failure rate, so is unlikely to be labeled
as guilty.  Since this is the case for every pair $(R, R')$ or $(R', R)$,
$R$ itself is unlikely to be labeled as guilty.  Thus for a perfect
attacker, setting
$\gcr = 0.0$ will ensure that we have no false negatives during guilt
detection while
minimizing the number of false positives.  It turns out that
this holds also for a merely reliable attacker, presumably because
her relays are in the consensus frequently enough that they will participate
in at least one circuit with average failure-rate~$0.0$.

It is still possible to have false-positives when detecting guilty relays.
For example, such relays could be in the consensus at least
once during suspect-detection and fail in every such trial, but then
in the consensus at least once during guilt-detection and succeed in
every such trial.  There are such relays in our replay data.
In Figure~\ref{fig:suspect_fp_perfect_borisov} we show the
false-positive rate as a function of the number of trials during
suspect and guilt detection.  As we can see, there is a slight increase in the
rate as the number of trials increases.  This is because a
false-positive is typically an unreliable relay; increasing the number
of trials during suspect detection
gives such a relay a chance to be seen by the detection algorithm,
but since it is likely to fail, it will be labeled as a suspect.
Likewise, during guilt detection, it is more likely to be seen with
more trials; if it is only in the consensus once, then it only needs to be
part of a single successful circuit to be labeled guilty, as the failure
rate of that circuit is computed with respect to the number of times
that circuit can be formed.
\begin{figure}
\begin{center}
\includegraphics[]{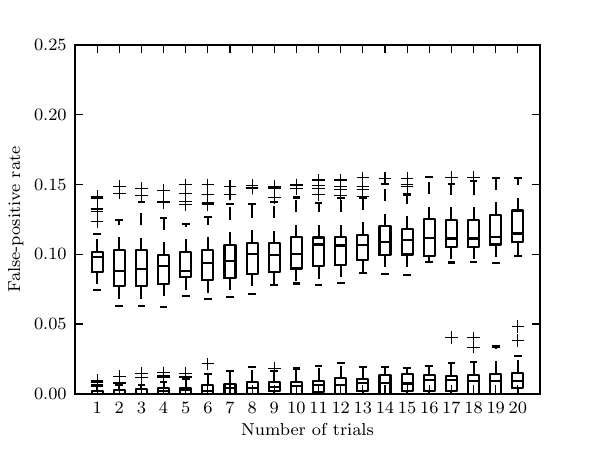}
\end{center}
\caption{False-positive rates for suspect and guilty detection with
the reliable naive attacker.  The suspect rates are in the $10$--$15$\% range
and the guilty rates are in the $0$--$1$\% range.  
For details, see Appendix~\ref{sec:about_data}.}
\label{fig:suspect_fp_perfect_borisov}
\end{figure}

Next we consider a non-naive attacker.  
As an example, consider an attacker with
$(\killprob,\permitprob) = (0.8, 0.8)$ (so, as per the results in
Section~\ref{sec:theory-eff}, compromises about $70$--$80$\% as many circuits
as the naive attacker).  Setting $\scr=1.0$ leads to an unacceptably
high false negative rate during suspect detection ($.50$--$1.0$, increasing
as we increase the number of trials), as the attacker will rarely have
a perfect success rate, even if she only has reliable relays.  
The false-positive rate is comparable to that when detecting the naive
attacker, as the attacker strategy does not affect the
behavior of non-attacking relays during suspect detection.
Figure~\ref{fig:suspect_fp_fn_reliable_tuned} shows the false-positive
and -negative rates as $\scr$ is varied (in all such figures, solid lines 
indicate false-positive rates, dashed lines false-negative rates).  
As we can see, provided we are
willing to run suspect detection for~$10$ trials, we can take
$\scr=.4$ and have no false-negatives with an acceptable false-positive rate.
\begin{figure}
\begin{center}
\includegraphics[]{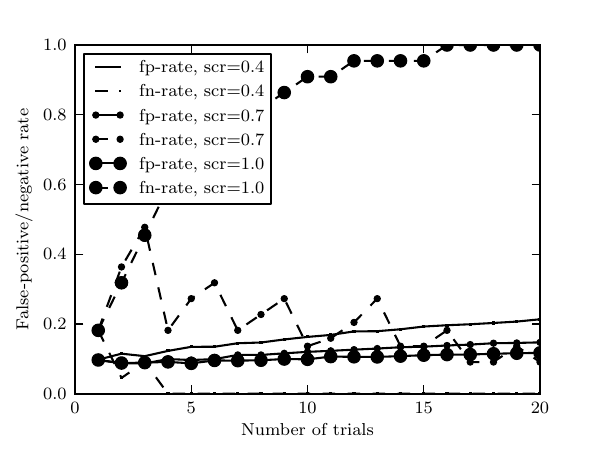}
\end{center}
\caption{False-positive and -negative rates for suspect detection with
the reliable tuned attacker with $(\killprob,\permitprob) = (.8, .8)$.
See Appendix~\ref{sec:about_data} for details, including an explanation
of the ``zig-zag pattern'' for $\scr=0.7$.}
\label{fig:suspect_fp_fn_reliable_tuned}
\end{figure}
Figure~\ref{fig:guilty_fp_fn_reliable_tuned} shows the false-positive
and -negative rates for guilty detection
as $\gcr$ is varied, keeping $\scr=.4$.  Again we see
that we can reduce the false-negative rate to~$0.0$, while maintaining
a false-positive rate of approximately~$2.5$\% by running the guilty detection
phase for $8$--$10$ trials and taking~$\gcr=.30$.\footnote{We also observe
that this value of~$\gcr$ matches nicely with the transient circuit
failure rates shown in Figure~\ref{fig:circuit-failures}; this seems to
indicate that by tuning $\gcr$, we help eliminate false-positives that
are caused by such transient failures.}
\begin{figure}
\begin{center}
\includegraphics[]{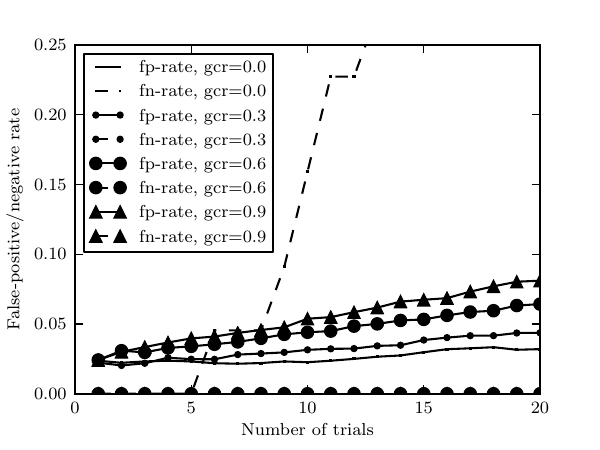}
\end{center}
\caption{False-positive and -negative rates for guilty detection with
the reliable tuned attacker with $(\killprob,\permitprob) = (.8, .8)$.
See Appendix~\ref{sec:about_data} for details.}
\label{fig:guilty_fp_fn_reliable_tuned}
\end{figure}

Finally we consider how well our detection algorithm works as the
attacker reduces her kill probability from $1.0$ (the naive attacker)
to $0.0$ (the passive attacker), keeping her permit probability
at $1.0$.  Of course, if $\killprob=0.0$, then the attacker cannot
be detected; our interest here is how quickly our algorithm loses
effectiveness.
Figure~\ref{fig:suspect_fp_fn_rates} shows the suspect
false-positive and -negative rates for various values of~$\scr$; here
we have run suspect detection for~$15$ trials.
\begin{figure}
\begin{center}
\includegraphics[]{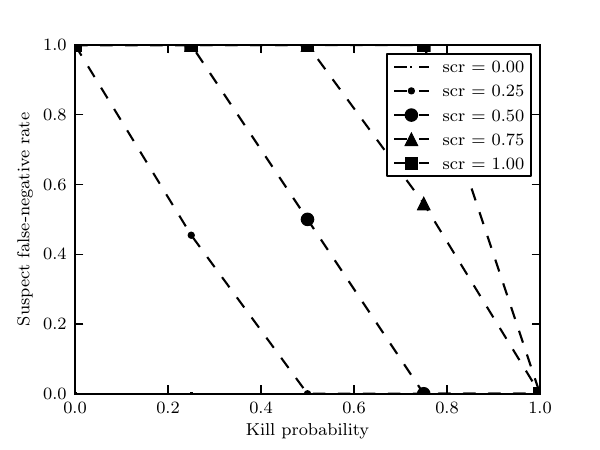}
\end{center}
\caption{False-negative rates for suspect detection with
the reliable attacker with varying kill probabilities and $\scr$ values.
See Appendix~\ref{sec:about_data} for details.}
\label{fig:suspect_fp_fn_rates}
\end{figure}
As we can see, even if the attacker lowers her kill probability 
to~$.5$ in order to escape detection, we will still have a nearly $0$\%
false-negative rate during suspect detection, provided we lower $\scr$
to~$.25$.  Of course, lowering $\scr$ increases the
false-positive rate; in this case, lowering to $.25$ from $1.0$
increases the false-positive rate from about $10$\% to about $25$\%
(this is independent of the attacker's kill probability, since this does
not affect the behavior of non-attackers during suspect detection).
This rather high false-positive rate is only an issue if it
persists through guilt detection.  In 
Figure~\ref{fig:practical_guilty_comparison}
we show the false-positive and -negative rates for guilt detection as $\gcr$
and the kill probability are varied, where we fix
$\scr$ at $.25$ and run both suspect and guilt detection for~$15$ trials.
\begin{figure}
\begin{center}
\includegraphics[]{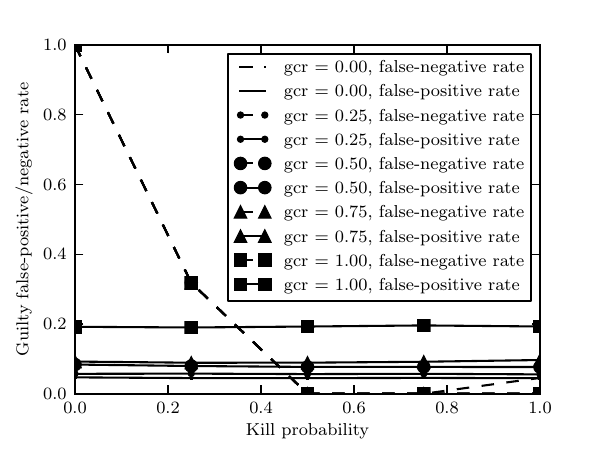}
\end{center}
\caption{False-negative rates for guilty detection with
the reliable attacker with varying kill probabilities and $\gcr$ values.
See Appendix~\ref{sec:about_data} for details.}
\label{fig:practical_guilty_comparison}
\end{figure}
Clearly, just about any value of $\gcr > 0.0$ suffices to reduce the
false-negative rate to essentially $0$\%, even when the attacker's
kill-probability is~$.5$.  And provided $\gcr < 1.0$, the false-positive
rate is about~$5$\%.  
This seems like a reasonable compromise if the
primary goal is to identify compromised relays.  

\subsection{How good is good enough?}

Just as we can ask how effective the denial-of-service attack must be,
we can also ask how effective any detection algorithm must be.  For example,
is a false-positive rate of $5$\% acceptable?  Again, we do not answer
this directly, because this seems to be more of a matter for policy
(of course, if the false-positive rate were $90$\%, the policy would
be easy to settle).  We assume that any automated algorithm would really
flag ``guilty'' relays as being relays that deserve further inspection.
Probably such inspection would be carried out by humans.  The goal of
algorithms such as those presented here is to reduce the workload of
humans to a manageable level by clearing many relays of suspicion
automatically.

\section{Variants of the attack}
\label{sec:variants}

The attacker we have described in Section~\ref{sec:borisov-attack},
and on which our detection algorithms are based, kills circuits
unconditionally according to the parameters $\killprob$ and $\permitprob$.
However, an attacker may be interested in a \emph{contextual} attack,
for example only attacking connections to particular hosts or traffic
of a certain type.  Our analytic model and detection algorithms
handle contextual attacks more-or-less well depending on the
specifics of the context.

On one end of the spectrum are contexts in which circuit membership
can be determined by a relay in any position on the circuit.
An example such context is bulk-download traffic.  In this case,
$\killprob$ and $\permitprob$ are the probabilities of killing and
permitting circuits that satisfy the context, respectively.
The analytic model is unchanged.  The only change to the detection
algorithm is what constitutes a ``probe'' of a relay; now it is a
circuit that satisfies the context.

Somewhat more challenging are contexts in which circuit membership 
can only be determined by relays in certain positions.  An example
is circuits that connect to a specific host; only the exit relay
can determine whether the context is satisfied or not.
This means that if a guard or middle
node determines that it is the only attacker node in the circuit,
it does not have enough information to determine whether the circuit
satisfies the attack context.  Our analytic model can be adapted to
handle this attacker by adjusting the calculation of $u_j(C)$, the
probability that the attempt to build circuit~$C$
is unsuccessful when the client
has $j$~attacker nodes in his guard node list.  Recall that
\[
u_j(C) = (1-\permitprob)\cdot\Pr[\text{$C$ controlled}] + \killprob\cdot\Pr[\text{$C$ compr., uncontr.}].
\]
What we need to do is to define two kill probabilities:
$p_{\mathrm{kill, aware}}$ and $p_{\mathrm{kill, unaware}}$.
The former is the kill probability for relays that can determine whether or not
the context is satisfied; the latter is for relays that cannot.  We can then
rewrite the term $\killprob\cdot\Pr[\text{$C$ compr., uncontr.}]$ to
take into account both kinds of relays.  For the example at hand,
this term would be
\[
p_{\mathrm{kill,aware}}\Bigl(1-\frac j3\Bigr)e^* +
p_{\mathrm{kill,unaware}}\left(\frac j3\Bigl(1-e^*\Bigr) + \Bigl(1-\frac j3\Bigr)(1-e^*)m\right).
\]
Note that we only need one value for $\permitprob$, because if the
relays can communicate enough to determine that they are all on the
same circuit, then presumably any context-aware relay can convey the
context information to the other relays as well.  How the
``bad-relay, good-relay'' detection algorithm fares depends on the
relation between $p_{\mathrm{kill,aware}}$ and
$p_{\mathrm{kill,unaware}}$.  If they are equal, then no change is needed
(this is unsurprising, since in this case, the analytic model is also
unchanged).  But suppose that $p_{\mathrm{kill,unaware}} = 0.0$---i.e.,
an uncontrolled
circuit is only killed by an exit relay that observes a connection
to the desired host.  As described, the first phase of our detection
algorithm will have an unacceptable false-negative rate for
guards, as they will never kill circuits.
It is possible to adapt the algorithm so that \emph{all} guards
are initially labeled as suspects.  This increases
the cost of the second phase.\footnote{And we would no longer have the
option of only comparing each suspect to a fixed number of other
suspects for guilt detection as described at the end of
Section~\ref{sec:practical}, since that strategy appears to rely on
the assumption that suspect detection does not have a very high
false-positive rate.}
However, our preliminary experiments indicate
that the false-positive and -negative rates for guilt detection are
essentially unchanged from those shown in 
Figure~\ref{fig:guilty_fp_fn_reliable_tuned}.
And there is a trade-off for the attacker
here.  By setting $p_{\mathrm{kill,unaware}}=0.0$, her guards are
not suspiciously killing circuits which, in all likelihood, are not
even connecting to the targeted endpoint, and this makes our detection
algorithm more time-consuming to run.
On the other hand, her
attack is less effective:
our analytic model predicts that 
she eventually controls about half as many circuits as
for the context-independent attack.

An attack targeted at an individual user is much more difficult
for our model and detection algorithms.  It is unlikely that such
an attack would be launched using only relays; it seems much more likely
that the attacker controls the user's ISP and performs denial-of-service
in order to control the exit relay.  There is no need to control the
entry node in this case, as the ISP can view the traffic between 
the client and entry relay, and that is sufficient.  Obviously a centralized
authority running our detection algorithm would not see the attack,
since it would not be attacked at all.  And the ISP could easily see
if the individual user were running the algorithm and react accordingly.
This highlights the restrictions of the locality assumption of
Section~\ref{sec:borisov-attack}:  we consider denial-of-service attacks
that are run from individual Tor relays, not from more powerful attackers
who may have more global resources (such as an ISP).  We also note that
such an attacker might have sufficiently global resources as to be
able to launch a purely passive (and hence undetectable) attack using
timing analysis techniques such as those described
by~\citeN{murdoch-zielinski:pet2007}.

In order to reduce the effectiveness of the detection algorithms,
an attacker may ``frame'' honest nodes, under the reasonable assumption
that the information content provided by 
a detection algorithm that produces too many false positives
would be too low to be useful.  
One approach is to kill circuits in an attempt to frame the honest
nodes that are on those circuits, but
it is not clear that such framing
would be effective with the ``bad-relay, good-relay'' algorithm.
In the first phase (suspect detection), only one ``wild'' node is probed
at a time; thus no framing is possible in this phase, and the only
false positives are unreliable relays.  In the second phase (guilt detection),
relays are paired up, and so an attacker might try to frame honest nodes.
However, in this phase, a relay is labeled as guilty if it is
``too reliable;'' since the only honest nodes to make it into this phase
are unreliable, and nothing an attacker can do will make them more
reliable, it seems difficult to frame any nodes during this phase either.
Another approach is to employ denial-of-service attacks to flood
an honest relay with traffic, thereby making it a suspect in the first
phase of the detection, then stopping the denial-of-service in order
to make it guilty in the second.  This would require the attacker to
know when the detection algorithm is being run, and it is unclear
whether one can reasonably defend against an attacker with this level
of knowledge.

\section{Running the detection algorithms}
\label{sec:deployment}

The algorithms we have proposed here are not intended to be run
by individual Tor users (in contrast to the algorithm described by
\citeN{das-borisov:securing-tor-tunnels}).  Thus there is legitimate
concern as to how these algorithms can be employed in a way that does
not fundamentally alter the decentralized nature of Tor.  This
decentralization is important.  If it is known that the detection
algorithms are run by a small, specific set of relays, then the
attacker can easily avoid detection by simply permitting connections
to/from those relays.  It would be very interesting to see if a 
distributed version of these algorithms could be implemented.
One possible version would have a large percentage of relays
involved in running the detection algorithm, with each relay testing
a small portion of the network.
Relays would then vote on the results in a manner analogous to 
how the directory authorities currently vote on flags, etc.

A certain amount of centralization is almost certainly lost though
in the final stages.  It is problematic at best to allow a fully automated
process to block operators from participating in Tor.  If a group of
relays is deemed by these algorithms to be launching DoS attacks, it
seems almost necessary that humans ultimately step in to determine an
appropriate course of action.

\section{Related work}
\label{sec:related}

The arms race between attackers and defenders in anonymity systems has
a long history. System designers aim to prevent attacks, or failing
that, to detect and respond to them. In turn, attackers attempt to
evade or bypass prevention and detection mechanisms. Here, we briefly
survey some related work in this arms race.

The MorphMix system~\cite{morphmix:wpes2002}, like Tor, is a
peer-to-peer system for low-latency anonymous communication on the
Internet. The system's design includes a collusion detection
mechanism. 
Later, \citeN{morphmix:pet2006} showed
that local knowledge of the network does not suffice to detect
colluding adversaries. 

\citeN{danezis:wpes2003} propose a detection algorithm for active
attacks in mixes, based upon self-addressed heartbeat messages sent
through the mix itself. This algorithm is concerned with an
$(n-1)$ attack, where an attacker floods an honest node with fake
messages to enable the linking of the sender and receiver of a single
message; a heartbeat is used to attempt detection of such attacks. The
heartbeat mechanism has some parallels to our probing mechanisms,
though the attacker models are quite different.

Murdoch~\citeyearpar{HotOrNot,steven-thesis} examines the use and
detection of various covert channels in attacks on anonymity
systems. The types of attack algorithms and corresponding detection
mechanisms again illustrate the arms race, though they do not map to
the attacker model we examine.

\citeN{das-borisov:securing-tor-tunnels}  propose a detection
algorithm intended to be used by individual Tor users in order to
avoid circuits compromised by a DoS attacker; 
this work is the closest to ours.  
The algorithm itself is similar to
our ``bad relay-good relay'' detection algorithm 
(though their algorithm might be described as ``good relay-bad relay'').
Their goal is to allow clients to
identify (potentially) compromised circuits over a short timeframe in order
to avoid using them, rather than to identify specific nodes implementing
a DoS attack over a longer timeframe.  
Their approach can be used
to mitigate the risk to a user from an ISP-level attacker which, as
we discuss in Section~\ref{sec:variants}, our algorithm cannot do
(although as we also note there, such an attacker might very well
gain enough information from a passive attack).
The cost of running their algorithm (in terms of number of circuits created) 
appears lower than that of our algorithms.
However, it is not clear that the overhead on the entire network would
actually be lower if all clients were to implement their algorithm.

\section{Conclusion}
\label{sec:conc}

The denial of service attack on Tor-like networks is potentially quite
powerful, allowing an adversary to break the anonymity of
users at a rate much higher than when passively listening.
We have provided a careful analysis of the parameters that define such
an attack, as well as an analytic model of the attacker's effectiveness.
We have tested this model against a simulation based on replaying
data collected from the deployed Tor network and seen that it is
accurate.  We have also shown that the power of the denial-of-service
attack comes at a price by giving two algorithms that detect any
such attacker by constructing a number of circuits that is linear
in the number of relays in the network.  One such algorithm is
deterministic and proved correct given a set of assumptions about the
network and attacker, and the other probabilistic and shown to be
effective using our replay simulation technique.
We finish by discussing how these algorithms fare in the face of
attackers who deviate from our model.

\appendix

\section{About the data}
\label{sec:about_data}

All of the data described 
in this paper, as well as the programs used to collect
and analyze it, are publicly available at Wesleyan University's
WesScholar site at
\url{http://wesscholar.wesleyan.edu/compfacpub}, in the section for
this paper.
When specific numbers
are indicated, they refer to data collected approximately $10$--$11$~June~2011
(timestamps are included in the data).
This dataset consists of $100$ trials.
Following are some notes on
how specific figures were produced:

\paragraph*{Figure~\ref{fig:theory_v_sim} (Comparison of theoretical model
to simulation)}
For each $r\in\set{.01,.02,\dots,.10}$, relays in the replay data were
compromised according to the algorithm described in
Section~\ref{sec:theory-eff} to reach the target goal of $g = e = r$.
Then $10,000$ circuits were constructed and $1,000$ bootstraps are
performed.  Each bootstrap consists of selecting $10,000$ circuits
from the population, sampling with replacement, and recording the 
percentage of selected circuits that are controlled by the attacker.
The median and interquartile range of the $1,000$ bootstraps is shown.
Then the analytically-predicted value is computed, using the actual
guard, exit, and guard-exit compromise ratios and the actual values of
$\gamma$, $\eta$, and $\zeta$ as in the collected data.

\paragraph*{Figure~\ref{fig:circuit-failures} (Circuit construction
failure rate)}
For each trial in
the replay data, we constructed~$100$ circuits and noted whether each
was successful or not.  We then sampled these circuits with
replacement $100$ times and noted the proportion of failed circuits.
We do the sampling~$10$ times per trial and show the median
failure rate.

\paragraph*{Figure~\ref{fig:suspect_fp_perfect_borisov} (Suspect and
guilty false-positive rates for the reliable naive attacker)}
For each number of trials~$n$, we run the suspect detection phase $100$
times.  Each time we choose a starting trial at random, run the algorithm
over~$n$ trials, and record the false-positive rate.  We display the median
rate and the inter-quartile ranges.  Taking $5$ trials as our best
number of trials for suspect detection, for each~$n$ we run the suspect
and guilty detection phases together $100$ times.  Each time we choose a
starting trial at random, run the suspect phase for $5$ trials, the guilty
phase for $n$~trials, and record the false-positive rate for guilt detection.
We display the median rate and inter-quartile ranges.

\paragraph*{Figure~\ref{fig:suspect_fp_fn_reliable_tuned} (False-positive
and -negative suspect rates for reliable tuned attacker)}
For each number of trials~$n$ and value of $\scr$, 
we run the suspect detection phase for $n$
trials with the given value of $\scr$.
Then we compute the false-positive and \mbox{-negative}
rates for suspect detection.  This is repeated~$100$ times, each time choosing
a starting trial at random.  We plot the median false-positive and 
false-negative rate for each combination of $n$ and $\scr$.
The ``zig-zag'' pattern for the false-negative rate
when $\scr=0.7$ is an artifact of how
the number of trials that a guilty relay must pass to avoid
detection changes as the total number of trials increases.  This number
is $1$ for $1$--$3$ trials; $2$ for $4$--$6$ trials; $3$ for
$7$--$9$ trials; etc.  If the number of trials to pass to avoid 
detection does not increase, then the false-negative rate will increase.
As $\scr$ decreases, the number of trials to pass jumps less frequently,
and the false-negative rate is already relatively low, so the pattern is
not as obvious at lower values.

\paragraph*{Figure~\ref{fig:guilty_fp_fn_reliable_tuned} (False-positive
and -negative guilty rates for reliable tuned attacker)}
For each number of trials~$n$ and value of $\gcr$, 
we run the suspect detection phase for $10$
trials with $\scr=.4$, followed by guilty detection for $n$ trials with the
given value of~$\gcr$.  Then we compute the false-positive and -negative
rates for guilty detection.  This is repeated~$100$ times, each time choosing
a starting trial at random.  We plot the median false-positive and 
false-negative rate for each combination of $n$ and $\gcr$.

\paragraph*{Figure~\ref{fig:suspect_fp_fn_rates} (False-negative suspect
rates for varying kill probability)}
For each value of $\scr$ and $\killprob$, suspect detection is run for $15$
trials and the false-negative rate is recorded.  This is repeated $100$ times,
each time choosing a starting trial at random.  The median false-negative
rate is plotted for each value.

\paragraph*{Figure~\ref{fig:practical_guilty_comparison} 
(False-positive/negative rates for varying kill probability)}
For each value of $\gcr$ and $\killprob$, suspect detection is run for $15$
trials with $\scr$ fixed at $.25$.  Then guilt detection is run for $15$
trials with the given value of $\gcr$ and the false-positive and -negative
rates are recorded.  This is repeated $100$ times, each time choosing a
starting trial for suspect detection at random.  The median rate is plotted
for each pair of values.

\section*{Acknowledgment}
We would like to thank George Bissias
for noting an error in an early version of the proof of 
Theorem~\ref{main-theorem}. 
We also thank the many anonymous reviewers who provided helpful
comments and suggestions for improving this paper, particularly the
suggestions that led to the content of Section~\ref{sec:variants}.

\bibliographystyle{plainnat}
\bibliography{dos-in-tor}

\end{document}